\documentclass[a4paper]{article}

\usepackage[latin1]{inputenc} %
\usepackage[T1]{fontenc} %
\usepackage{RR}
\usepackage{hyperref}

\usepackage{graphicx,amsmath}
\usepackage{subfigure}
\usepackage{epsfig}
\usepackage{verbatim}
\usepackage{algorithm}
\usepackage{algorithmic}
\usepackage{color}
\usepackage{multirow}
\usepackage{latexsym}

\newcommand{\beq}{\begin{equation}}
\newcommand{\deq}{\end{equation}}

\newcommand{\beqm}{\begin{equation*}}
\newcommand{\deqm}{\end{equation*}}

\newcommand{\baq}{\begin{eqnarray}}
\newcommand{\daq}{\end{eqnarray}}

\newcommand{\baqm}{\begin{eqnarray*}}
\newcommand{\daqm}{\end{eqnarray*}}

\newcommand{\super}[1]{^{\mathrm{#1}}}

\newcommand{\qed}{\rightline{$\Box$}}

\newcommand{\one}{{\bf 1}}
\newcommand{\range}[2]{\in#1,\dots,#2}

\newenvironment{proof}{{\bf Proof. }}{\hfill \rightline{$\Box$}\medskip}
\newtheorem{theorem}{Theorem}[section]

\newtheorem{definition}{Definition}[section]

\newtheorem{algo}{Algorithm}[section]

\def\t1{\hbox{\bf 1}}

\RRdate{August 2010}

\RRauthor{
Konstantin Avrachenkov\thanks{INRIA Sophia Antipolis-M\'editerran\'ee, France, K.Avrachenkov@sophia.inria.fr}
\and
Nelly Litvak\thanks{University of Twente, The Netherlands, N.Litvak@ewi.utwente.nl}
\and
\\ Danil Nemirovsky\thanks{INRIA Sophia Antipolis-M\'editerran\'ee, France, Danil.Nemirovsky@gmail.com}
\and
Elena Smirnova\thanks{INRIA Sophia Antipolis-M\'editerran\'ee, France, Elena.Smirnova@sophia.inria.fr}
\and
Marina Sokol\thanks{INRIA Sophia Antipolis-M\'editerran\'ee, France, Marina.Sokol@sophia.inria.fr}
}

\authorhead{K. Avrachenkov, N. Litvak, D. Nemirovsky, E. Smirnova \& M. Sokol}

\RRtitle{Les M\'ethodes Monte Carlo pour Top-k Listes de PageRank
Personnalis\'e avec l'application a disambiguation de noms} 
\RRetitle{Monte Carlo Methods for Top-k Personalized PageRank Lists
and Name Disambiguation}
\titlehead{Monte Carlo Methods for Top-k Personalized PageRank Lists}

\RRresume{ Nous \'etudions le probl\`eme de d\'etection rapide de
top-k listes de PageRank Personnalis\'e. Ce probl\`eme a plusieurs
applications importantes telles que la recherche des coupes locales
de graphes, l'\'estimation de la distance de la similarit\'e, et
disambiguation de noms. En particulier, nous appliquons nos
resultats a construction des algorithmes efficaces pour le
probl\`eme de disambiguation de noms de personnes. Notre \'etude est
bas\'e sur les deux observations suivantes. D'abord, il est cruciale
que nous trouvons rapidement les top-k voisins les plus importants
d'un noeud. Cependant, l'ordre exact dans le top-K ainsi que les
valeurs exactes de PageRank sont de loin pas si cruciale.
Deuxiemement, un petit nombre de elements erron\'es dans les top-k
listes ne degrade pas vraiment la qualite des listes de top-k, mais
ce sacrifice am\'eliore significativement la performance des
algorithmes. Sur la base de ces deux observations cl\'es nous
proposons des m\'ethodes de type Monte Carlo pour la d\'etection
rapide de top-k listes de PageRank Personnalis\'e. Nous offrons
l'\'evaluation des performances des m\'ethodes propos\'ees et nous
donnons crit\`eres d'arr\^et. En suite, nous appliquons les
m\'ethodes au probl\`eme de disambiguation de noms de personnes.
Notre approche bas\'e sur PageRank Personnalis\'e et les m\'ethodes
Monte Carlo a recu le deuxi\`eme prix de la comp\'etion WePS 2010.}

\RRabstract{
We study a problem of quick detection of top-k Personalized PageRank
lists. This problem has a number of important applications such as
finding local cuts in large graphs, estimation of similarity distance
and name disambiguation. In particular, we apply our results to
construct efficient algorithms for the person name disambiguation
problem. We argue that when finding top-k Personalized PageRank lists two
observations are important. Firstly, it is crucial that we detect fast
the top-k most important neighbours of a node, while the exact order
in the top-k list as well as the exact values of PageRank are by far
not so crucial. Secondly, a little number of wrong elements in top-k
lists do not really degrade the quality of top-k lists, but it can lead to
significant computational saving. Based on these two key observations
we propose Monte Carlo methods for fast detection of top-k Personalized
PageRank lists. We provide performance evaluation of the proposed methods
and supply stopping criteria. Then, we apply the methods to the person name
disambiguation problem. The developed algorithm for the person name
disambiguation problem has achieved the second place in the WePS 2010
competition.
}

\RRmotcle{PageRank Personnalis\'e, M\'ethodes Monte Carlo,
Disambiguation de Noms de Personnes} \RRkeyword{Personalized
PageRank, Monte Carlo Methods, Person Name Disambiguation} 

\RRprojet{Maestro, Axis}  
\RRtheme{\THCom} 
\URSophia 

\begin{document}

\RRNo{7367}

\makeRR

\section{Introduction}
\label{sec:intro}

Personalized PageRank or Topic-Sensitive PageRank \cite{haveliwala02topic} is a generalization
of PageRank \cite{BPMW98}. Personalized PageRank is a stationary distribution of a random
walk on an entity graph. With some probability the
random walk follows an outgoing link with uniform distribution and with the complementary
probability the random walk jumps to a random node according to a personalization distribution.
Personalized PageRank has a number of applications. Let us name just a few. In the original
paper \cite{haveliwala02topic} Personalized PageRank was used to introduce the personalization
in the Web search. In \cite{citeulike:1286302,LK03,conf/icalt/Wissner-Gross06} Personalized
PageRank was suggested for finding related entities. In \cite{citeulike:2348659}
Green measure, which is closely related to Personalized PageRank, was suggested
for finding related pages in Wikipedia. In
\cite{conf/focs/AndersenCL06, conf/waw/AndersenCL07} Personalized PageRank was
used for finding local cuts in graphs and in \cite{ADNPS08} the Personalized
PageRank was applied for clustering large hyper-text document collections.
In many applications we are interested in detecting top-k elements with the
largest values of Personalized PageRank. Our present work on detecting top-k
elements is driven by the following two key observations:

\medskip

\noindent {\bf Observation 1:}
Often it is crucial that we detect fast the top-k elements with the largest values
of the Personalized PageRank, while the exact order in the top-k list as well as the exact
values of the Personalized PageRank are by far not so important.

\medskip

\noindent {\bf Observation 2:}
It is not crucial that the top-k list is determined exactly, and
therefore we may apply a relaxation that allows a small number of elements to be
placed erroneously in the top-k list. If the Personalized PageRank values of these
elements are of a similar order of magnitude as in the top-k list, then such
relaxation does not affect applications, but it enables us to take advantage of the
generic ``80/20 rule'': 80\% of the result is achieved with 20\% of efforts.

\medskip

We argue that the Monte Carlo approach naturally takes into
account the two key observations. In \cite{breyer02markovian} the Monte Carlo approach
was proposed for the computation of the standard PageRank. The estimation of the
convergence rate in \cite{breyer02markovian} was very pessimistic. Then, the implementation
of the Monte Carlo approach was improved in \cite{FRCS05} and also applied there to
Personalized PageRank. Both \cite{breyer02markovian} and \cite{FRCS05} only use end points
as information extracted from the random walk. Moreover, the approach of \cite{FRCS05}
requires extensive precomputation efforts and is very demanding in storage resource.
In \cite{ALNO07}, it has been shown that to find elements with large values of PageRank
the Monte Carlo approach requires about the same number of operations as one iteration
of the power iteration method. In the present work we show that to detect top-k list
of elements when $k$ is not large we need even smaller number of operations. In our test
on the Wikipedia entity graph with about 2 million nodes we have observed that typically
few thousands of operations are enough to detect the top-10 list with just two or three
erroneous elements. Namely to detect a relaxation of the top-10 list we spend just
about 1-5\% of operations required by one power iteration. In the present work we
provide theoretical justification for such a small amount of required operations.
We also apply the Monte Carlo methods for Personalized PageRank to the person name
disambiguation problem. Name resolution problem consists in clustering search results for a person name according to found namesakes. We found that considering patterns of Web structure for name resolution problem results in methods with very competitive performance.



%

\section{Monte Carlo methods}
\label{sec:methods}

Given a directed or undirected graph connecting some entities, the Personalized
PageRank $\pi(s,c)$ with a seed node $s$ and a damping parameter $c$ is defined as a
solution of the following equations
$$
\pi(s,c)=c \pi(s,c) P + (1-c) \one_s^T,
$$
$$
\sum_{j=1}^n \pi_j(s,c) = 1.
$$
where $\one_s^T$ is a row unit vector with one in the $s\super{th}$ entry and all
the other elements equal to zero, $P$ is the transition matrix associated with
the entity graph and $n$ is the number of entities. Equivalently, the Personalized PageRank
can be given by the explicit formula \cite{LM06,moler}
\beq
\label{chap:top-k:ppr_expl} \pi(s,c) = (1-c) \one_s^T [I-cP]^{-1}.
\deq
Whenever the values of $s$ and $c$ are clear from the context we
shall simply write $\pi$.

We would like to note that often the Personalized PageRank is defined
with a general distribution $v$ in place of $\one_s^T$. However, typically
distribution $v$ has a small support. Then, due to linearity, the problem
of Personalized PageRank with distribution $v$ reduces to the problem
of Personalized PageRank with distribution $\one_s^T$ \cite{JW03}.

In this work we consider two Monte Carlo algorithms. The first algorithm is
inspired by the following observation. Consider a random walk $\{X_t\}_{t
\ge 0}$ that starts from node $s$, i.e, $X_0=s$. Let at each step the
random walk terminate with probability $1-c$ and make a transition according to the
matrix $P$ with probability $c$. Then, the end-points of such a random walk
has the distribution $\pi(s,c)$.

\begin{algo}[MC End Point]
\label{chap:top-k:algo:endpoint}
Simulate $m$ runs of the random walk $\{X_t\}_{t \ge 0}$ initiated at node
$s$. Evaluate $\pi_j$ as a fraction of $m$ random walks which end at node
$j\range{1}{n}$.
\end{algo}

The next observation leads to another Monte Carlo algorithm for Personalized
PageRank. Denote $Z:=[I-cP]^{-1}$. We have the following interpretation for the
elements of matrix $Z$: $z_{sj}=E_s[N_j]$, where $N_j$ is the number of visits to
node $j$ by a random walk before a restart, and $E_s[\cdot]$ is the expectation
assuming that the random walk started at node $s$. Namely, $z_{sj}$ is the expected
number of visits to node $j$ by the random walk initiated at state $s$ with the run time
geometrically distributed with parameter $c$. Thus, the formula
\eqref{chap:top-k:ppr_expl} suggests the following estimator for Personalized
PageRank
\beq
\label{chap:top-k:ppr_hat}
\hat{\pi}_j(s,c) = (1-c) \frac{1}{m} \sum_{r=1}^m N_j(s,r),
\deq
where $N_j(s,r)$ is the number of visits to state $j$ during the run $r$ of the
random walk initiated at node $s$. Thus, we can suggest the second Monte Carlo
algorithm.

\begin{algo}[MC Complete Path]
\label{chap:top-k:algo:completepath}
Simulate $m$ \ runs of the random walk $\{X_t\}_{t \ge 0}$ initiated
at node $s$. Evaluate $\pi_j$ as the total number of visits to node $j$
multiplied by $(1-c)/m$.
\end{algo}

As outputs of the proposed algorithms we would like to obtain with high
probability either a {\em top-k list} of nodes or a {\em top-k basket} of nodes.

\begin{definition}
The \emph{top-$k$ list} of nodes is a list of $k$ nodes with largest
Personalized PageRank values arranged in a descending order of their Personalized PageRank values.
\end{definition}

\begin{definition}
The \emph{top-$k$ basket} of nodes is a set of $k$ nodes with largest
Personalized PageRank values with no ordering required.
\end{definition}

%

It turns out that it is beneficial to relax our goal and to obtain a top-$k$ basket
with a small number of erroneous elements.

\begin{definition}
We call \emph{relaxation-$l$} top-$k$ basket a realization when we allow at
most $l$ erroneous elements from top-$k$ basket.
\end{definition}

In the present work we aim to estimate the numbers of random walk runs $m$ sufficient
for obtaining top-$k$ list or top-$k$ basket or relaxation-$l$ top-$k$ basket with
high probability. In particular, we demonstrate that ranking converges considerably
faster than the values of Personalized PageRank and that a relaxation-$l$ with quite
small $l$ helps significantly.

Let us begin the analysis of the algorithms with the help of an illustrating example
on the Wikipedia entity graph. We shall carry out the development of the example throughout
the paper. There is a number of reasons why we have chosen the Wikipedia entity graph.
Firstly, the Wikipedia entity graph is a non-trivial example of a complex network.
Secondly, it has been shown that the Green's measure which is closely related to Personalized
PageRank is a good measure of similarity on the Wikipedia entity graph \cite{citeulike:2348659}.
In addition, we note that Personalized PageRank is a good similarity measure also in
social networks \cite{LK03} and on the Web \cite{SMP08}.
Thirdly, we apply our person name disambiguation algorithm on the real Web for which we cannot
compute the real values of the Personalized PageRank. The Personalized PageRank can be
computed with high precision for the Wikipedia entity graph with the help of BVGraph/WebGraph
framework \cite{BV04}.

\bigskip

\noindent
{\bf Illustrating example:}
Since our work is concerned with application of Personalized PageRank to the name disambiguation
problem, let us choose a common name. One of the most common English names is Jackson.
We have selected three Jacksons who have entries in Wikipedia: Jim Jackson (ice hockey),
Jim Jackson (sportscaster) and Michael Jackson. Two Jacksons have even a common given name
and both worked in ice hockey, one as an ice hockey player and another
as an ice hockey sportscaster. In Tables~\ref{tab:PPRexactJJplayer}-\ref{tab:PPRexactMJsinger}
we provide the exact lists of top-10 Wikipedia articles arranged according to Personalized PageRank
vectors. In Table~\ref{tab:PPRexactJJplayer} the seed node for the Personalized PageRank is the
article {\tt Jim Jackson (ice hockey)}, in Table~\ref{tab:PPRexactJJcaster} the seed node is
the article {\tt Jim Jackson (sportscaster)}, and in Table~\ref{tab:PPRexactMJsinger} the seed
node is the article {\tt Michael Jackson}. We observe that each top-10 list identifies quite
well its seed node. This gives us hope that Personalized PageRank can be useful in the name
disambiguation problem. (We shall discuss more the name disambiguation problem in Section~\ref{sec:disamb}.)
Next we run the Monte Carlo End Point method starting from each seed node. We note that top-10
lists obtained by Monte Carlo methods also identify well the original seed nodes. It is interesting
to note that to obtain a relaxed top-10 list with two or three erroneous elements we need different
number of runs for different seed nodes. To obtain a good relaxed top-10 list for {\tt Michael Jackson}
we need to perform about 50000 runs, whereas for a good relaxed top-10 list for {\tt Jim Jackson (ice hockey)}
we need to make just 500 runs. Intuitively, the more immediate neighbours a node has, the larger
number of Monte Carlo steps is required. Starting from a node with many immediate neighbours the Monte
Carlo method easily drifts away. In Figures~\ref{fig:jjplayermcep}-\ref{fig:mjsingermcep} we present
examples of typical runs of the Monte Carlo End Point method for the three different seed nodes.
An example of the Monte Carlo Complete Path method for the seed node {\tt Michael Jackson} is given
in Figure~\ref{fig:mjsingermccp}. Indeed, as expected, it outperforms the Monte Carlo End Point method.
In the following sections we shall quantify all the above qualitative observations.

{\small

\begin{table}[ht]
\caption{Top-10 lists for Jim Jackson (ice hockey)}
\centering
\begin{tabular}{|l|l|l|}
  \hline
  No. & Exact Top-10 List & MC End Point (m=500)\\ [0.5ex]
  \hline
  1 & Jim Jackson (ice hockey) & Jim Jackson (ice hockey) \\
  2 & Ice hockey & Winger (ice hockey) \\
  3 & National Hockey League & 1960 \\
  4 & Buffalo Sabres & National Hockey League \\
  5 & Winger (ice hockey) & Ice hockey \\
  6 & Calgary Flames & February 1 \\
  7 & Oshawa & Buffalo Sabres \\
  8 & February 1 & Oshawa \\
  9 & 1960 & Calgary Flames \\
  10 & Ice hockey rink & Columbus Blue Jackets \\
  \hline
\end{tabular}
\label{tab:PPRexactJJplayer}
\end{table}

}

{\small

\begin{table}[ht]
\caption{Top-10 lists for Jim Jackson (sportscaster)}
\centering
\begin{tabular}{|l|l|l|}
  \hline
  No. & Exact Top-10 List & MC End Point (m=5000)\\ [0.5ex]
  \hline
  1 & Jim Jackson (sportscaster) & Jim Jackson (sportscaster) \\
  2 & Philadelphia Flyers & Steve Coates \\
  3 & United States & New York \\
  4 & Philadelphia Phillies & United states \\
  5 & Sportscaster & Philadelphia Flyers \\
  6 & Eastern League (baseball) & Gene Hart \\
  7 & New Jersey Devils & Sportscaster \\
  8 & New York - Penn League & New Jersey Devils \\
  9 & Play-by-play & Mike Emrick \\
  10 & New York & New York - Penn League \\
  \hline
\end{tabular}
\label{tab:PPRexactJJcaster}
\end{table}

}

{\small

\begin{table}[ht]
\caption{Top-10 lists for Michael Jackson}
\centering
\begin{tabular}{|l|l|l|}
  \hline
  No. & Exact Top-10 List & MC End Point (m=50000)\\ [0.5ex]
  \hline
  1 & Michael Jackson & Michel Jackson\\
  2 & United States & United states\\
  3 & Billboard Hot 100 & Pop music\\
  4 & The Jackson 5 & Epic Records\\
  5 & Pop music & Billboard Hot 100\\
  6 & Epic records & Motown Records\\
  7 & Motown Records & The Jackson 5 \\
  8 & Soul music & Singing\\
  9 & Billboard (magazine) & Hip Hop music\\
  10 & Singing & Gary, Indiana\\
  \hline
\end{tabular}
\label{tab:PPRexactMJsinger}
\end{table}

}

\begin{figure}
    \centering
    \includegraphics[scale=0.5]{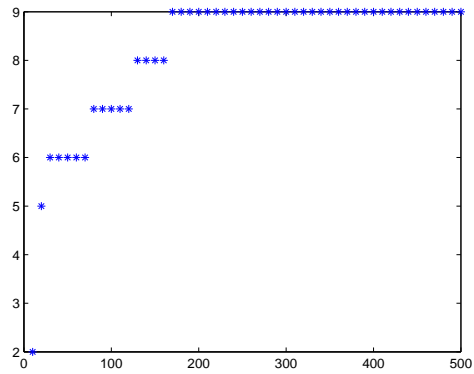}
    \caption{The number of correctly detected elements
    by MC End Point for the seed node {\tt Jim Jackson (ice hockey)}.}
    \label{fig:jjplayermcep}
\end{figure}

\begin{figure}
    \centering
    \includegraphics[scale=0.5]{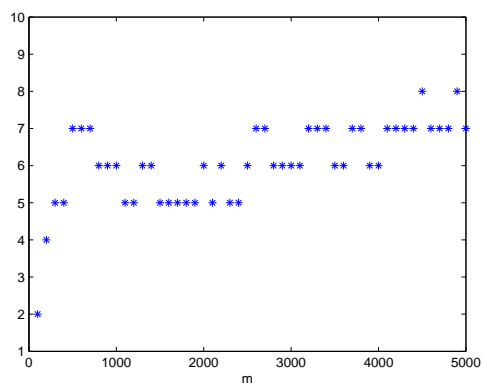}
    \caption{The number of correctly detected elements
    by MC End Point for the seed node {\tt Jim Jackson (sportscaster)}.}
    \label{fig:jjcastermcep}
\end{figure}

\begin{figure}
    \centering
    \includegraphics[scale=0.5]{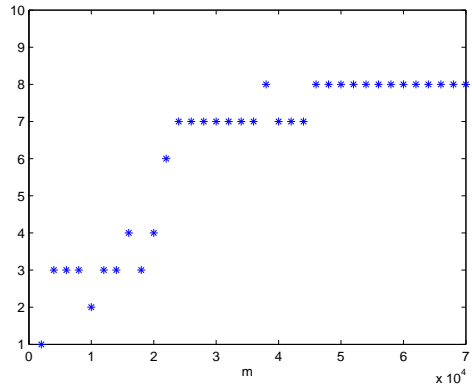}
    \caption{The number of correctly detected elements
    by MC End Point for the seed node {\tt Michael Jackson}.}
    \label{fig:mjsingermcep}
\end{figure}

\begin{figure}
    \centering
    \includegraphics[scale=0.5]{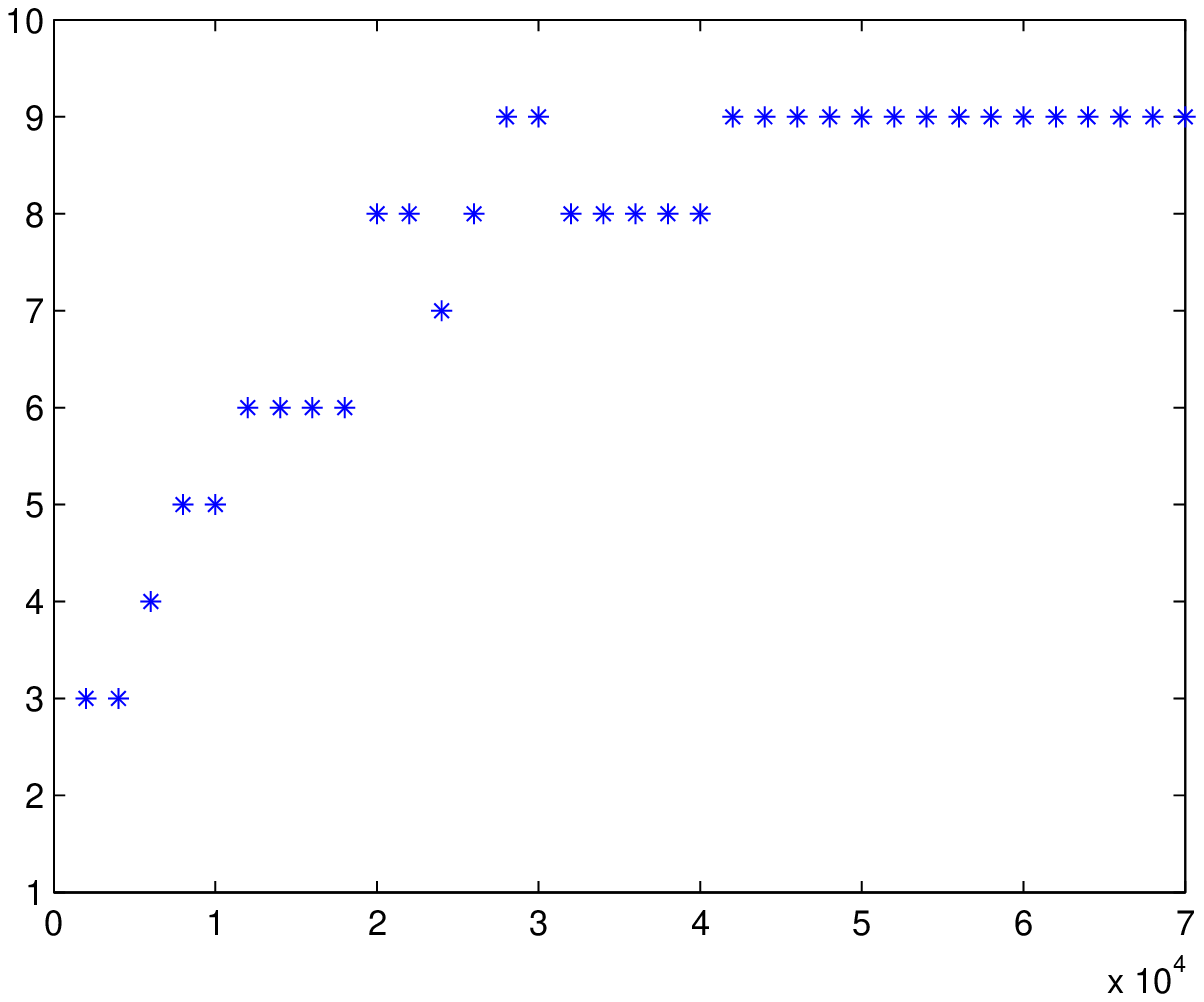}
    \caption{The number of correctly detected elements
    by MC Complete Path for the seed node {\tt Michael Jackson}.}
    \label{fig:mjsingermccp}
\end{figure}

\section{Variance based performance comparison and CLT approximations}
\label{sec:var}

In the MC End Point algorithm the distribution of end points is multinomial \cite{GVK213193329}.
Namely, if we denote by $L_j$ the number of paths that end at node $j$ after $m$
runs, then we have
\beq
\label{chap:top-k:multinom}
P\{L_1=l_1,L_2=l_2,\dots,L_n=l_n\}=\frac{m!}{l_1!l_2!\cdots l_n!}
\pi_1^{l_1} \pi_2^{l_2} \cdots \pi_n^{l_n}.
\deq
Thus, the standard deviation of the MC End Point estimator for
the $k\super{th}$ element is given by
\beq\label{chap:top-k:MCEPstd}
\sigma(\hat \pi_k)=\sigma(L_k/m)=\frac{1}{\sqrt{m}}\sqrt{\pi_k(1-\pi_k)}.
\deq

An expression for the standard deviation of the MC Complete Path is
more complicated. From (\ref{chap:top-k:ppr_hat}), it follows that
\beq
\label{chap:top-k:MCCPstd0}
\sigma(\hat \pi_k)=\frac{(1-c)}{\sqrt{m}}\sigma(N_k)
=\frac{(1-c)}{\sqrt{m}}\sqrt{E_s\{N_k^2\}-E_i\{N_k\}^2}.
\deq
First, we recall that
\beq
\label{chap:top-k:ENk}
E_s\{N_k\}=z_{sk}=\pi_k(s)/(1-c).
\deq
Then, from \cite{GVK232428700}, it is known that the second moment of $N_k$ is
given by $$
E_s\{N_k^2\}=[Z(2Z_{dg}-I)]_{sk},
$$
where $Z_{dg}$ is a diagonal matrix having as its diagonal the diagonal of matrix $Z$
and $[A]_{ik}$ denotes the $(i,k)\super{th}$ element of matrix $A$. Thus, we
can write
$$
E_s\{N_k^2\}=\one_s^T Z (2Z_{dg}-I)\one_k=\frac{1}{1-c}\pi(s)(2Z_{dg}-I)\one_k
$$
\beq\label{chap:top-k:ENk2}
=\frac{1}{1-c}\left(\frac{1}{1-c}\pi_k(s)\pi_k(k)-\pi_k(s)\right).
\deq
Substituting~\eqref{chap:top-k:ENk} and~\eqref{chap:top-k:ENk2}
into~\eqref{chap:top-k:MCCPstd0}, we obtain
\beq
\label{chap:top-k:MCCPstd1}
\sigma(\hat \pi_k)=\frac{1}{\sqrt{m}} \sqrt{\pi_k(s)(2\pi_k(k)-(1-c)-\pi_k(s))}.
\deq
Since $\pi_k(k)\approx 1-c$, we can approximate $\sigma(\hat \pi_k)$ with
$$
\sigma(\hat \pi_k) \approx \frac{1}{\sqrt{m}} \sqrt{\pi_k(s)((1-c)-\pi_k(s))}.
$$
Comparing the latter expression with \eqref{chap:top-k:MCEPstd}, we can see that MC End
Point requires approximately $1/(1-c)$ steps more than MC Complete Path. This was expected
as MC End Point uses only information from end points of the random walks. We would
like to emphasize that $1/(1-c)$ can be a significant coefficient. For instance, if
$c=0.85$, then $1/(1-c) \approx 6.7$.


Let us provide central limit type theorems for our estimators.

\begin{theorem}\label{chap:top-k:thm:CLTmultinom}
For large $m$, a multivariate normal density approximation to the multinomial
distribution $(\ref{chap:top-k:multinom})$ is given by
$$
f(l_1,l_2,\dots,l_n) =
\left(\frac{1}{2\pi m}\right)^{(n-1)/2} \times
$$
\beq\label{chap:top-k:multinom_approx}
\left(\frac{1}{n \pi_1 \pi_2 \cdots \pi_n}\right)^{1/2}
\exp\left\{ -\frac{1}{2}\sum_{i=1}^n \frac{(l_i-m \pi_i)^2}{m \pi_i}\right\},
\deq
subject to $\sum_{i=1}^n l_i = m$.
\end{theorem}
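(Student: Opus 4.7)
The plan is to apply Stirling's approximation $k! \sim \sqrt{2\pi k}\,(k/e)^k$ to every factorial in the multinomial mass function $(\ref{chap:top-k:multinom})$ and then Taylor expand around the mean, treating the fluctuations $l_i - m\pi_i$ as small compared to $m$. Writing $l_i = m\pi_i + x_i$ with $\sum_i x_i = 0$, the Stirling substitution produces three pieces: a polynomial pre-factor built from the $\sqrt{2\pi\,\cdot}$ terms, a cancellation of the $e$-factors (using $\sum_i l_i = m$), and a logarithmic exponent $\sum_i l_i \log(m \pi_i / l_i)$.

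First I would handle the pre-factor. Replacing each $\sqrt{2\pi l_i}$ by $\sqrt{2\pi m \pi_i}$ to leading order gives
\[
\frac{\sqrt{2\pi m}}{\prod_{i=1}^n \sqrt{2\pi l_i}} \; \sim \; (2\pi m)^{-(n-1)/2}\,(\pi_1 \pi_2 \cdots \pi_n)^{-1/2}.
\]
For the exponent, Taylor-expanding $l_i \log(m\pi_i/l_i) = -l_i \log(1 + x_i/(m\pi_i))$ to second order yields $-x_i - x_i^2/(2 m \pi_i) + O(x_i^3/m^2)$ per coordinate; summing and using $\sum_i x_i = 0$ leaves exactly $-\tfrac{1}{2}\sum_i (l_i - m\pi_i)^2/(m\pi_i)$, which is the Gaussian exponent of $(\ref{chap:top-k:multinom_approx})$.

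The remaining discrepancy is the $\sqrt{n}$ factor in the stated pre-factor, which I would obtain from converting a point mass on the constrained lattice $\mathbb{Z}^n \cap \{\sum_i l_i = m\}$ into a density with respect to $(n-1)$-dimensional Hausdorff measure on the hyperplane. Using the basis $\{e_i - e_n\}_{i=1}^{n-1}$, the induced Gram matrix is $I_{n-1} + J_{n-1}$ (with $J$ the all-ones matrix), whose determinant is $n$; hence each lattice point occupies a fundamental parallelepiped of $(n-1)$-volume $\sqrt{n}$. Dividing the Stirling-approximated probability mass by $\sqrt{n}$ produces the claimed pre-factor $(n \pi_1 \cdots \pi_n)^{-1/2}$.

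The main obstacle is being honest about the regime of validity: Stirling's formula requires each $l_i$ to be large and the Taylor step requires $|x_i| \ll m\pi_i$, so one must restrict to deviations $|l_i - m\pi_i| = o(m^{2/3})$ and to probabilities $\pi_i$ bounded away from zero; otherwise the cubic remainder in the log-expansion is not negligible and small-$\pi_i$ coordinates would require a Poisson-type local correction instead. Under these mild assumptions the approximation is pointwise, uniform on compacts in the standardized coordinates $(l_i - m\pi_i)/\sqrt{m\pi_i}$, which is the usual local central limit theorem for the multinomial.
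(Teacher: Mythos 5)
Your derivation is correct, but it takes a genuinely different route from the paper's, because the paper offers no argument at all for this theorem: its ``proof'' is a bare citation to Khatri--Mitra (1969) and Tanabe--Sagae (1992). Those references arrive at \eqref{chap:top-k:multinom_approx} by identifying it as the density of the degenerate normal law matching the multinomial's mean $m\pi$ and covariance $\Sigma=m(\mathrm{diag}(\pi)-\pi\pi^{T})$: on the hyperplane $\sum_i l_i=m$ the quadratic form $(l-m\pi)^{T}\Sigma^{+}(l-m\pi)$ built from the Moore--Penrose inverse reduces to $\sum_i(l_i-m\pi_i)^2/(m\pi_i)$, and the pseudo-determinant of $\Sigma$ equals $m^{n-1}n\,\pi_1\cdots\pi_n$, which is where the otherwise mysterious factor $n$ in the normalization comes from. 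Your Stirling-plus-Taylor computation instead reproves the statement from scratch as a local central limit theorem, and all three of your ingredients check out: the prefactor $(2\pi m)^{-(n-1)/2}(\pi_1\cdots\pi_n)^{-1/2}$, the cancellation of the first-order terms via $\sum_i x_i=0$ leaving $-\tfrac12\sum_i x_i^2/(m\pi_i)$, and the Gram determinant $\det(I_{n-1}+J_{n-1})=n$ that converts lattice point masses into a density on the constraint hyperplane (one can sanity-check the resulting $1/\sqrt{n}$ at $n=2$ against the de Moivre--Laplace binomial approximation, where the stated $f$ is indeed the binomial mass divided by $\sqrt{2}$). What your route buys is a self-contained proof with an explicit regime of validity ($|l_i-m\pi_i|=o(m^{2/3})$ and $\pi_i$ bounded away from zero), which the citation-only proof leaves entirely implicit; what the cited route buys is the structural reading of \eqref{chap:top-k:multinom_approx} as the singular Gaussian density with respect to Hausdorff measure on the simplex hyperplane, which connects more directly to the covariance-matrix formulation used for MC Complete Path in Theorem~\ref{chap:top-k:thm:CLTiid}.
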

\noindent
\begin{proof}
See~\cite{conf/multivariateanalysis/KhatriM1969}
and~\cite{journal/jstor/TanabeM1992}.
\end{proof}

Now we consider MC Complete Path. First, we note that the vectors
$N(s,r)=(N_1(s,r),\dots,N_n(s,r))$ with $r=1,2,\dots$ form a sequence of i.i.d.
random vectors. Hence, we can apply the multivariate central limit theorem.
Denote
\beq
\hat N(s,m)=\frac{1}{m}\sum_{r=1}^m N(s,r).
\deq

\begin{theorem}\label{chap:top-k:thm:CLTiid}
Let $m$ go to infinity. Then, we have the following convergence in distribution
to a multivariate normal distribution
$$
\sqrt{m} \left(\hat N(s,m) - \bar{N} \right) \stackrel{D}{\longrightarrow} {\cal
N}(0,\Sigma(s)), $$
where $\bar{N}(s)=\one^T_s Z$ and
$\Sigma(s)=E\{N^T(s,r)N(s,r)\}-\bar{N}^T(s)\bar{N}(s)$ is a
covariance matrix, which can be expressed as
\beq
\label{chap:top-k:eq:covar_matrix_expr}
\Sigma(s)=\Omega\left(s\right)Z+Z^T\Omega\left(s\right)-\Omega\left(s\right)-Z^T\one_s\one^T_sZ.
\deq
where the matrix $\Omega(s)=\left\{\omega_{jk}(s)\right\}$ is defined by
\baqm
    \omega_{jk}(s) =
    \left\{
        \begin{array}{ll}
            z_{sj}, & \mbox{if $j=k$},\\
            0, & \mbox{otherwise}.
        \end{array}
    \right.
\daqm
\end{theorem}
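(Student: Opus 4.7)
The proof naturally splits into the convergence-in-distribution claim and the identification of the covariance matrix, and the latter is where the real work lies.

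First I would observe that because the random walk restarts independently at each run, the vectors $N(s,r)$ for $r=1,2,\dots$ are i.i.d. The run length is a geometric random variable with parameter $1-c$, so all of its moments are finite, and in particular $E_s\{\|N(s,1)\|^2\}<\infty$ (each $N_j$ is dominated by the total run length). Hence the classical multivariate CLT applies to $\hat N(s,m)=\frac{1}{m}\sum_{r=1}^m N(s,r)$, giving
$$\sqrt{m}\bigl(\hat N(s,m)-\bar N(s)\bigr)\stackrel{D}{\longrightarrow}\mathcal{N}\bigl(0,\Sigma(s)\bigr),$$
where $\bar N(s)=E_s\{N(s,1)\}$ and $\Sigma(s)=E_s\{N^T(s,1)N(s,1)\}-\bar N^T(s)\bar N(s)$. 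The mean is immediate from the already-recalled identity $E_s\{N_k\}=z_{sk}$, yielding $\bar N(s)=\one_s^T Z$.

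The substantive step is the computation of the off-diagonal second moments $E_s\{N_j N_k\}$. I would write $N_j N_k = \sum_{t_1,t_2\ge 0}\mathbf{1}(X_{t_1}=j,\,X_{t_2}=k,\,T>\max(t_1,t_2))$, where $T$ is the geometric termination time, and split the sum according to whether $t_1<t_2$ or $t_1>t_2$ (and treat the diagonal $t_1=t_2$ separately, which contributes only when $j=k$). On $\{t_1<t_2\}$ the Markov property together with $P(T>t_2)=c^{t_2}$ gives
$$\sum_{t_1<t_2}P_s(X_{t_1}=j)\,P_j(X_{t_2-t_1}=k)\,c^{t_2}=z_{sj}\,(z_{jk}-\delta_{jk}),$$
and symmetrically the $t_1>t_2$ part contributes $z_{sk}(z_{kj}-\delta_{kj})$. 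Combining with the diagonal term $z_{sj}\delta_{jk}$ gives the unified formula
$$E_s\{N_j N_k\}=z_{sj}z_{jk}+z_{sk}z_{kj}-\delta_{jk}z_{sj},$$
which for $j=k$ reduces to the expression $2z_{sk}z_{kk}-z_{sk}$ already used in~\eqref{chap:top-k:ENk2}, providing a sanity check.

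Finally I would translate this entrywise identity into matrix form. With $\Omega(s)$ the diagonal matrix whose $(j,j)$-entry is $z_{sj}$, one checks directly that $[\Omega(s)Z]_{jk}=z_{sj}z_{jk}$, $[Z^T\Omega(s)]_{jk}=z_{kj}z_{sk}$, and $[\Omega(s)]_{jk}=\delta_{jk}z_{sj}$, so that $E_s\{N^T(s,1)N(s,1)\}=\Omega(s)Z+Z^T\Omega(s)-\Omega(s)$. Subtracting $\bar N^T(s)\bar N(s)=Z^T\one_s\one_s^T Z$ yields exactly~\eqref{chap:top-k:eq:covar_matrix_expr}. The main conceptual obstacle is the bookkeeping in the strong-Markov splitting of $E_s\{N_j N_k\}$—in particular keeping track of the discount factors $c^{t_2}$ and the $\delta_{jk}$ adjustments so that the result packages into the clean matrix expression above; the rest is essentially invocation of the multivariate CLT.
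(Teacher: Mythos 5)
Your proposal is correct, and it reaches the covariance formula by a genuinely different (and more self-contained) route than the paper. The paper's proof also invokes the standard multivariate CLT for the convergence, but for the key identity it simply imports the covariance expression $\Sigma(s)=\sum_{j=1}^n z_{sj}\left(D(j)Z+ZD(j)-D(j)\right)-Z^T\one_s\one_s^TZ$ from an external reference (Nemirovsky's tensor-moment report) and then spends its effort only on the matrix-algebra rewriting $\sum_j z_{sj}D(j)Z=\Omega(s)Z$, etc. You instead derive the mixed second moments from scratch: writing $N_jN_k$ as a double sum of indicators over visit times, using $P(T>t)=c^t$ and the Markov property to split into the cases $t_1<t_2$, $t_1>t_2$, $t_1=t_2$, and obtaining $E_s\{N_jN_k\}=z_{sj}z_{jk}+z_{sk}z_{kj}-\delta_{jk}z_{sj}$, which correctly packages into $\Omega(s)Z+Z^T\Omega(s)-\Omega(s)$; your sanity check against the paper's earlier formula $E_s\{N_k^2\}=2z_{sk}z_{kk}-z_{sk}$ confirms consistency. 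What your approach buys is independence from the cited result and an explicit verification of the integrability hypothesis of the CLT (finite second moments via the geometric run length), which the paper omits; what the paper's approach buys is brevity. A minor bonus of your entrywise derivation is that it unambiguously yields the symmetric matrix $\Omega(s)Z+Z^T\Omega(s)-\Omega(s)$, whereas the paper's intermediate step ``$\sum_j z_{sj}ZD(j)=Z^T\Omega(s)$'' is, as literally written, $Z\Omega(s)$ rather than $Z^T\Omega(s)$; your route lands on the correct final expression without passing through that slip.
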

\begin{proof}
The convergence follows from the standard multivariate central limit theorem.
We only need to establish the formula for the covariance matrix.

The covariance matrix can be expressed as follows~\cite{nemirovsky2009tensor}:
\beq
\label{chap:top-k:eq:covar_matrix_expr1}
\Sigma(s)=\sum_{j=1}^n
z_{sj}\left(D(j)Z+ZD(j)-D(j)\right)-Z^T\one_{s}\one^T_{s}Z,
\deq
where $D(j)$ is defined by
\beqm
d_{kl}(j)=
    \left\{
        \begin{array}{ll}
            1, & \mbox{if $k=l=j$},\\
            0, & \mbox{otherwise}.
        \end{array}
    \right.
\deqm

Let us consider $\sum_{j=1}^n z_{sj}D(j)Z$ in component form.
\baqm
\sum_{j=1}^n z_{sj}\sum_{\varphi=1}^{n}d_{l\varphi}(j)z_{\varphi k}=\sum_{j=1}^n
z_{sj}\delta_{lj}z_{jk}=z_{sl}z_{lk}=\sum_{j=1}^n
\omega_{lj}(s)z_{jk},
\daqm
and it implies that $\sum_{j=1}^n z_{sj}D(j)Z=\Omega(s)Z$. Symmetrically,
$\sum_{j=1}^n z_{sj}ZD(j)=Z^T\Omega(s)$. Equality $\sum_{j=1}^n
z_{sj}D(j)=\Omega(s)$ can be easily established. This completes the proof.

\end{proof}

We would like to note that in both cases we obtain the convergence to
rank deficient (singular) multivariate normal distributions.

Of course, one can use the joint confidence intervals for the CLT approximations
to estimate the quality of top-$k$ list or basket. However, it appears that we can
propose more efficient methods. Let us consider as an example mutual ranking of
two elements $k$ and $l$ from a list. For illustration purpose, assume that the
elements are independent and have the same variance. Suppose that we apply some
version of CLT approximation. Then, we need to compare two normal random variables
$Y_k$ and $Y_l$ with means $\pi_k$ and $\pi_l$, and with the same variance $\sigma^2$.
Without loss of generality we assume that $\pi_k > \pi_l$. Then, it can be shown
that one needs twice as more experiments to guarantee that the random variable
$Y_k$ and $Y_l$ inside their confidence intervals with the confidence level
$\alpha$ than to guarantee that $P\{Y_k \ge Y_l\}=\alpha$. Thus, it is more
beneficial to look at the order of elements rather than their absolute values.
We shall pursue this idea in more detail in the ensuing sections.


\section{Convergence based on order}
\label{sec:order}

For the two introduced Monte Carlo methods we would like to calculate or to
estimate a probability that after a given number of steps we correctly obtain
top-$k$ list or top-$k$ basket. Namely, we need to calculate the probabilities
$P\{L_1 > \dots > L_k > L_j, \forall j > k\}$ and $P\{L_i > L_j, \forall i,j : i
\le k < j \}$ respectively, where $L_k$, $k\range{1}{n}$, can be either the
Monte Carlo estimates of the ranked elements or their CLT approximations. We
refer to these probabilities as the ranking probabilities and we refer to
complementary probabilities as misranking probabilities \cite{BID05}.
Even though, these probabilities are easy to define, it turns out that
because of combinatorial explosion their exact calculation is infeasible
for non-trivial cases.

We first propose to estimate the ranking probabilities of top-$k$ list and top-$k$ basket
with the help of Bonferroni inequality \cite{GS96}. This approach works for reasonably large
values of $m$.

\subsection{Estimation by Bonferroni inequality}
Drawing correctly the top-$k$ basket is defined by the event
$$
\bigcap_{i \le k < j} \{ L_i > L_j \}.
$$
Let us apply to this event the Bonferroni inequality
$$
P\left\{ \bigcap_s A_s \right\} \ge 1 - \sum_s P\left\{ \bar{A}_s \right\}.
$$
We obtain
$$
P\left\{ \bigcap_{i \le k < j} \{ L_i > L_j \} \right\}
\ge 1 - \sum_{i \le k < j} P\left\{ \overline{\{ L_i > L_j \}} \right\}.
$$
Equivalently, we can write the following upper bound for the misranking probability
\begin{equation}\label{eq:ubmisrank}
1-P\left\{ \bigcap_{i \le k < j} \{ L_i > L_j \} \right\}
\le \sum_{i \le k < j} P\left\{ L_i \le L_j \right\}.
\end{equation}
We note that it is very good that we obtain an upper bound in the above
expression for the misranking probability, since the upper bound will provide a
guarantee on the performance of our algorithms. Since in the MC End Point method
the distribution of end points is multinomial (see (\ref{chap:top-k:multinom})),
the probability $P\left\{ L_i \le L_j \right\}$ is given by
\begin{equation}\label{eq:combmisrank}
P\{L_i \le L_j\}=
\end{equation}
$$
\sum_{l_i+l_j \le m, \ l_i \le l_j} \frac{m!}{l_i!l_j!(m-l_i-l_j)!} \pi_i^{l_i} \pi_j^{l_j}
(1-\pi_i-\pi_j)^{m-l_i-l_j}.
$$
The above formula can only be used for small values of $m$. For large values of $m$,
we can use the CLT approximation for the both MC methods. To distinguish between the original
number of hits and its CLT approximation, we use $L_j$ for the original number of hits at node $j$
and $Y_j$ for its CLT approximation. First, we obtain a CLT based
expression for the misranking probability for two nodes $P\left\{ Y_i \le Y_j \right\}$.
Since the event $\left\{ Y_i \le Y_j \right\}$ coincides with the event
$\left\{ Y_i-Y_j \le 0 \right\}$ and a difference of two normal random variables is
again a normal random variable, we obtain
$$
P\left\{ Y_i \le Y_j \right\} = P\left\{ Y_i-Y_j \le 0 \right\} = 1 - \Phi (\sqrt{m} \rho_{ij}),
$$
where $\Phi(\cdot)$ is the cumulative distribution function for the standard
normal random variable and
$$
\rho_{ij} = \frac{E[Y_i]-E[Y_j]}{\sqrt{\sigma^2(Y_i)-2\mbox{cov}(Y_i,Y_j)+\sigma^2(Y_j)}}.
$$
For large $m$, the above expression can be bounded by
$$
P\left\{ Y_i \le Y_j \right\} \le
\frac{1}{\sqrt{2\pi}}e^{-\frac{\rho_{ij}^2}{2}m} $$
Since the misranking probability for two nodes $P\left\{ Y_i \le Y_j
\right\}$ decreases when $j$ increases, we can write
$$
1-P\left\{ \bigcap_{i \le
 k < j} \{ Y_i > Y_j \} \right\} \le
$$
$$
\sum_{i=1}^k
\left( \sum_{j=k+1}^{j^*} P\left\{ Y_i \le Y_j \right\} + \sum_{j=j^*+1}^n P\left\{ Y_i \le Y_{j^*} \right\} \right),
$$
for some $j^*$. This gives the following upper bound
$$
1-P\left\{ \bigcap_{i \le k < j} \{ Y_i > Y_j \} \right\} \le
$$
\begin{equation}\label{chap:top-k:eq:upperboundexp}
\sum_{i=1}^k \sum_{j=k+1}^{j^*} (1 - \Phi (\sqrt{m} \rho_{ij}))
+ \frac{n-j^*}{\sqrt{2\pi}} \sum_{i=1}^k e^{-\frac{\rho_{ij^*}^2}{2}m}.
\end{equation}

Since we have a finite number of terms in the right hand side of expression
\eqref{chap:top-k:eq:upperboundexp}, we conclude that

\begin{theorem}
The misranking probability of the top-$k$ basket tends to zero with geometric rate, that is,
$$
1-P\left\{ \bigcap_{i \le k < j} \{ Y_i > Y_j \} \right\} \le C a^m,
$$
for some $C>0$ and $a \in (0,1)$.
\end{theorem}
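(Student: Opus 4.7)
The plan is to apply the bound \eqref{chap:top-k:eq:upperboundexp} derived immediately above the theorem statement and to observe that every summand on its right-hand side decays geometrically in $m$. Since the sum has only finitely many terms (at most $k(j^{*}-k)+k$ after choosing any fixed $j^{*}>k$), the aggregated bound is itself of the form $Ca^{m}$ for suitable constants.

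First, I would treat the second summand in \eqref{chap:top-k:eq:upperboundexp}: it is already written as a finite sum of exponentials of the form $e^{-\rho_{ij^{*}}^{2}m/2}$, so it is bounded by $k(n-j^{*})(2\pi)^{-1/2}\,\bar a_{1}^{m}$ with
$$
\bar a_{1} \;=\; \exp\!\left\{-\tfrac{1}{2}\min_{1\le i\le k}\rho_{ij^{*}}^{2}\right\}\;\in\;(0,1),
$$
provided every $\rho_{ij^{*}}$ is strictly positive. This positivity follows from the definition of the top-$k$ basket (so that $\pi_{i}>\pi_{j^{*}}$ strictly, hence $E[Y_{i}]>E[Y_{j^{*}}]$) and from the fact that the variance term $\sigma^{2}(Y_{i})-2\,\mbox{cov}(Y_{i},Y_{j^{*}})+\sigma^{2}(Y_{j^{*}})$ is strictly positive as the variance of the nondegenerate random variable $Y_{i}-Y_{j^{*}}$.

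For the first summand I would invoke the standard Gaussian tail estimate
$$
1-\Phi(x)\;\le\;\frac{1}{x\sqrt{2\pi}}\,e^{-x^{2}/2},\qquad x>0,
$$
applied with $x=\sqrt{m}\,\rho_{ij}$. This yields
$$
1-\Phi(\sqrt{m}\,\rho_{ij})\;\le\;\frac{1}{\rho_{ij}\sqrt{2\pi m}}\,e^{-\rho_{ij}^{2}m/2}.
$$
Each of the $k(j^{*}-k)$ terms therefore decays geometrically in $m$, and the polynomial prefactor $1/\sqrt{m}$ can be absorbed into a slightly larger geometric rate (or into the constant, since $1/\sqrt{m}\le 1$ for $m\ge 1$). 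Setting
$$
\bar a_{2} \;=\; \exp\!\left\{-\tfrac{1}{2}\min_{1\le i\le k,\;k<j\le j^{*}}\rho_{ij}^{2}\right\}\;\in\;(0,1)
$$
bounds the first summand by a constant multiple of $\bar a_{2}^{m}$.

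Combining the two pieces, one takes $a=\max(\bar a_{1},\bar a_{2})\in(0,1)$ and chooses $C>0$ large enough to dominate the finite prefactors, yielding the desired bound $Ca^{m}$. The only nontrivial input is the strict positivity of all the $\rho_{ij}$'s for $i\le k<j$; this is exactly the nondegeneracy built into the definition of a genuine top-$k$ basket, so no extra assumption is needed. I do not expect any real obstacle beyond carefully tracking that the variance in the denominator of $\rho_{ij}$ is bounded away from zero uniformly in the finite index set, which is immediate since there are only finitely many pairs $(i,j)$ to consider.
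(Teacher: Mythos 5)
Your proposal is correct and follows essentially the same route as the paper: the paper's proof consists precisely of the observation that the right-hand side of \eqref{chap:top-k:eq:upperboundexp} is a finite sum of terms each decaying geometrically in $m$ (having already bounded $1-\Phi(\sqrt{m}\,\rho_{ij})$ by $(2\pi)^{-1/2}e^{-\rho_{ij}^2 m/2}$ for large $m$). You merely make explicit the Gaussian tail estimate, the strict positivity of the $\rho_{ij}$, and the choice of $a$ as the worst geometric rate over the finite index set, which is exactly the intended argument.
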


We note that $\rho_{ij}$ has a simple expression in the case of the multinomial distribution
$$
\rho_{ij} = \frac{\pi_i-\pi_j}{\sqrt{\pi_i(1-\pi_i)+2\pi_i\pi_j+\pi_j(1-\pi_j)}}.
$$
For MC Complete Path $\sigma^2(Y_i)=\Sigma_{ii}(s)$ and $\mbox{cov}(Y_i,Y_j)=\Sigma_{ij}(s)$
where $\Sigma_{ii}(s)$ and $\Sigma_{ij}(s)$ can be calculated by (\ref{chap:top-k:eq:covar_matrix_expr}).

The Bonferroni inequality for the top-$k$ list gives
$$
P\{Y_1 > \dots > Y_k > Y_j, \forall j > k\} \ge
$$
$$
1 - \sum_{1 \le i \le k-1}
P\{Y_i \le Y_{i+1}\} - \sum_{k+1 \le j \le n} P\{Y_k \le
Y_{j}\}.
$$
Using misranking probability for two elements, one can obtain more informative bounds
for the top-$k$ list as was done above for the case of top-$k$ basket. For the misranking
probability of the top-$k$ list we also have a geometric rate of convergence.

\bigskip

\noindent {\bf Illustrating example (cont.):}
In Figure~\ref{fig:jjplayerBonfEP} we plot Bonferroni bound for the misranking probability
given by (\ref{eq:ubmisrank}) with the CLT approximation for the pairwise misranking probability.
We note that the Bonferroni inequality provides quite conservative estimation for the necessary
number of MC runs. Below we shall try to obtain a better estimation.

\begin{figure}
    \centering
    \includegraphics[height=2in,width=3in]{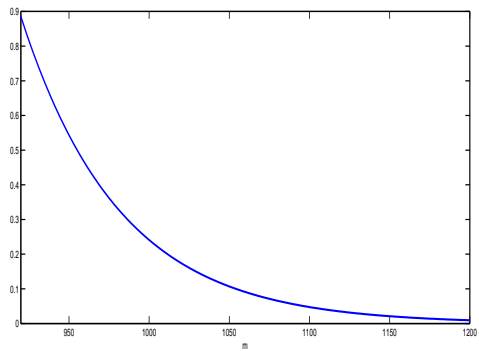}
    \caption{Bonferroni bound for MC End Point for the seed node
    {\tt Jim Jackson (ice hockey)} and top-9 basket.}
    \label{fig:jjplayerBonfEP}
\end{figure}

\subsection{Approximation based on order statistics}

We can obtain more insight on the convergence based on order with the help of order statistics \cite{DN03}.

Let us denote by $X_t \in \{1,...,n\}$ the node hit by the random walk $t$,
and let $X$ have the same distribution as $X_t$.
Now let us consider the $s$-th order statistic of the random
variables $X_t$, $t=1,...,m$. We can calculate
its cumulative distribution function as follows:
$$
P\{X_{(s)} \le k\} =
\sum_{j=0}^{m-s} {m\choose j}  P\{X>k\}^j P\{X \le k\}^{m-j}
$$
\begin{equation}
=1- \sum_{i=0}^{s-1} {m\choose i} P\{X \le k\}^{i} (1-P\{X \le k\})^{m-i}
\label{eq:sorder}
\end{equation}


It is interesting to observe that $P\{X_{(s)} \le k\}$ depends on the Personalized
PageRank distribution only via $P\{X \le k\} = \pi_1+...+\pi_k$ showing insensitivity
property with respect to the distribution's tail.

Next we notice that a reasonable minimal value of $m$ corresponds to the case
when the elements of the top-$k$ basket obtain $r$ or more hits with a high probability
and the other elements outside the top-$k$ basket will have very small probability
of $r$-times hit. Thus, the probability $P\{X_{(rk)} \le k\}$ should be reasonably high
and the probability of hitting $r$ times  the elements outside the top-$k$ basket should be
small. The probability to hit the element $j$ at least $r$ times is given by
\begin{equation}\label{eq:telement}
P\{Y_j \ge r\} = 1 - \sum_{\nu=0}^{r-1}{{m}\choose{\nu}}\pi_j^\nu(1-\pi_j)^{m-\nu}.
\end{equation}
Hence, choosing $m$ for the fast detection of the top-$k$ basket we need to
satisfy two criteria: (i) $P\{X_{(rk)} \le k\} \ge 1-\varepsilon_1$, and
(ii) $P(Y_j \ge r) \le \varepsilon_2$ for $j>k$. The probability in (i)
and $P(Y_j \ge r)$ in (ii) are both increasing with $m$. However, we have observed
(see the illustrating example next) that for a given $m$ the probabilities given
in (\ref{eq:sorder}) and (\ref{eq:telement}) drop drastically with $r$.  Thus,
we hope to be able to find a proper balance between $m$ and $r$ for a reasonably
small value of $r$.


We can further improve the computational efficiency for order statistics
distribution (\ref{eq:sorder}) with the help of incomplete Beta function
as suggested in \cite{AN05}. Namely, in our case we have
\begin{equation}
\label{eq:sorderbeta}
P\{X_{(s)} \le k\} = I_{P\{X \le k\}}(s,m-s+1),
\end{equation}
where
$$
I_x(a,b) = \frac{1}{B(a,b)} \int_0^x y^{a-1} (1-y)^{b-1} dy
$$
is the incomplete Beta function.


\bigskip

\noindent {\bf Illustrating example (cont.):}
We first consider the seed node {\tt Jim Jackson (ice hochey)}.
In Figure~\ref{fig:osk9t10jjplayer} we plot the probabilities given by (\ref{eq:sorder}) and
(\ref{eq:telement}) for $m \le 2000$, $r=5$, and $k=9$. For instance,
if we take $m=250$, $P\{X_{(45)} \le 9\}=0.9999$ and $P\{Y_{10} \ge 5\}=4.29\times10^{-5}$.
Thus, with very high probability we collect 45 hits inside the top-9 basket and
the probability for the 10-th element to receive more than or equal to 5 hits is very
small. Figure~\ref{fig:jjplayermcep} confirms that taking $m=250$ is largely enough
to detect the top-9 basket. Suppose now that we want to detect the top-10 basket.
Then, Figure~\ref{fig:osk10t11jjplayer} corresponding to $m \le 10000$ , $r=18$ and $k=10$ suggests that to obtain
correctly top-10 basket with high probability we need to spend about four times more
operations than for the case of the top-9 basket. Here we already see an illustration
to the ``80/20 rule'' which we discuss more in the next section.
Now let us consider the seed node {\tt Michael Jackson}. In Figure~\ref{fig:osk10t11mjsinger} we plot
the probabilities given by (\ref{eq:sorder}) and (\ref{eq:telement})
for $m \le 100000$, $r=57$, $k=10$ and $j=11$. We have $P\{X_{(570)} \le 10\}=0.9717$ and
$P\{Y_{11} \ge 57\}=0.2338$. Even though there is a significant chance to get some erroneous
elements in the top-10 list, as Figure~\ref{fig:osk10t100mjsinger} suggests we get ``high quality''
erroneous elements. Specifically, we have $P\{Y_{100} \ge 57\}=0.0077$ and
$P\{Y_{500} \ge 57\}=8.8\times10^{-47}$.

\begin{figure}
    \centering
    \includegraphics[scale=0.5]{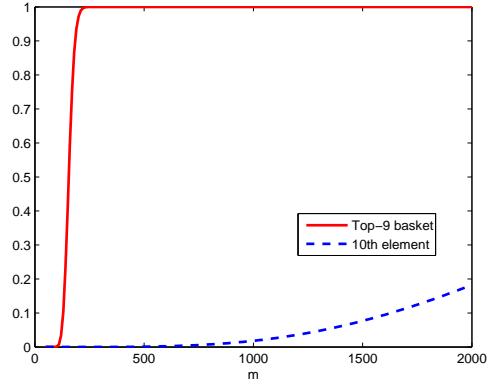}
    \caption{Evaluations based on order statistics for the seed node {\tt Jim Jackson (ice hockey)}:
    $P\{X_{(rk)} \le k\}$ (solid line) and $P\{L_j \ge r\}$ (dash line), $k=9$, $j=10$, $r=5$.}
    \label{fig:osk9t10jjplayer}
\end{figure}

\begin{figure}
    \centering
    \includegraphics[scale=0.5]{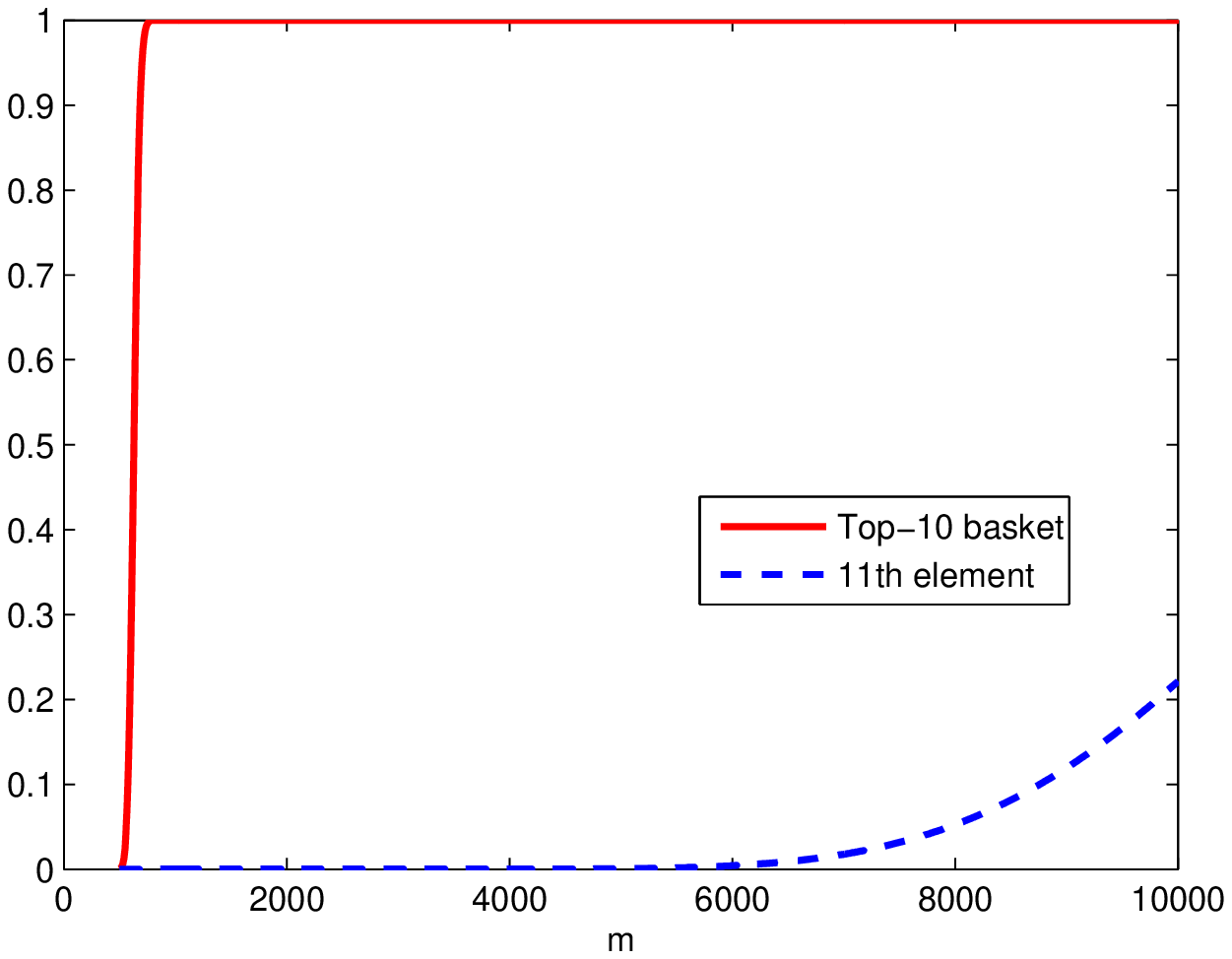}
    \caption{Evaluations based on order statistics for the seed node {\tt Jim Jackson (ice hockey)}:
    $P\{X_{(rk)} \le k\}$ (solid line) and $P\{L_j \ge r\}$ (dash line), $k=10$, $j=11$, $r=18$.}
    \label{fig:osk10t11jjplayer}
\end{figure}

\begin{figure}
    \centering
    \includegraphics[scale=0.5]{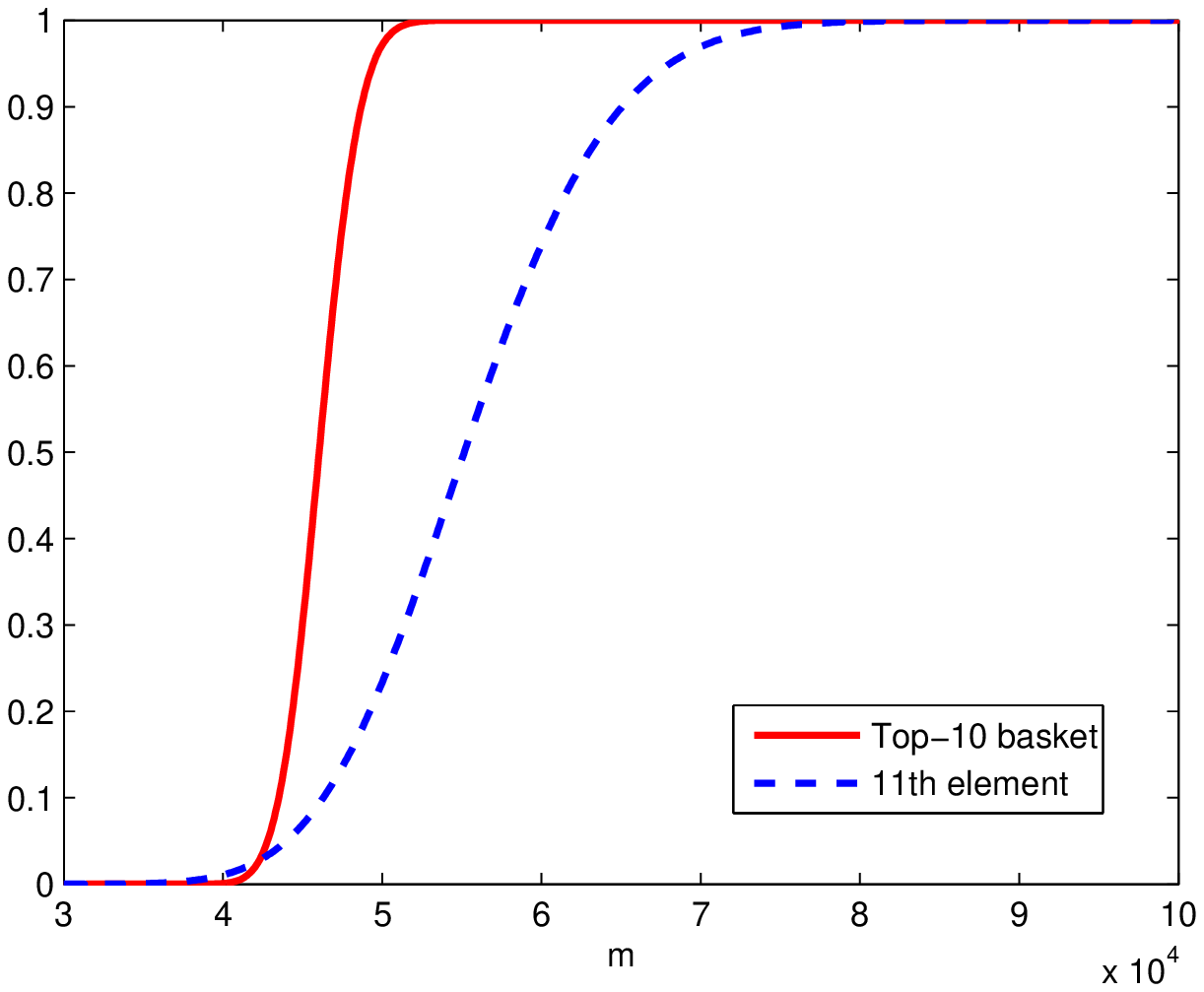}
    \caption{Evaluations based on order statistics for the seed node {\tt Michael Jackson}:
    $P\{X_{(rk)} \le k\}$ (solid line) and $P\{L_j \ge r\}$ (dash line), $k=10$, $j=11$, $r=57$.}
    \label{fig:osk10t11mjsinger}
\end{figure}

\begin{figure}
    \centering
    \includegraphics[scale=0.5]{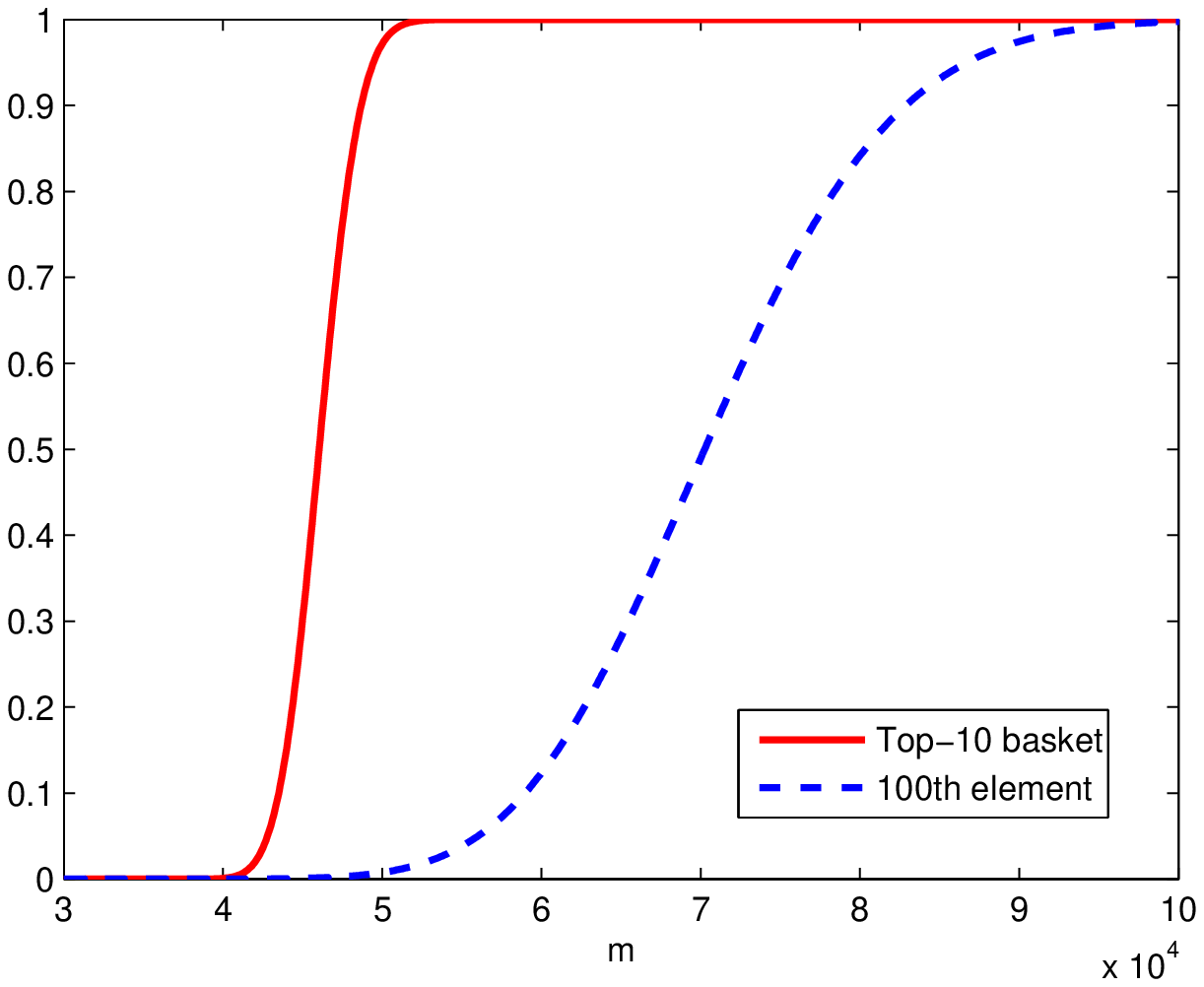}
    \caption{Evaluations based on order statistics for the seed node {\tt Michael Jackson}:
    $P\{X_{(rk)} \le k\}$ (solid line) and $P\{L_j \ge r\}$ (dash line), $k=10$, $j=100$, $r=57$.}
    \label{fig:osk10t100mjsinger}
\end{figure}

\section{Solution relaxation}
\label{sec:relax}

In this section we analytically evaluate the average number of correctly
identified top-$k$ nodes. We use the relaxation by allowing this number to be smaller than $k$. Our goal is to provide a
mathematical evidence for the observed ``80/20 behavior'' of the algorithm: 80 percent of the top-$k$ nodes are identified correctly in a very short time. Accordingly, we evaluate the number of experiments $m$ for obtaining high quality top-$k$ lists.

Let $M_0$ be a number of correctly identified elements in the top-$k$ basket.
In addition, denote by $K_i$ the number of nodes ranked not lower than $i$. Formally,
\[K_i=\sum_{j\ne i}1\{L_j\ge L_i\},\quad i=1,\ldots,k.\]
Clearly, placing node $i$ in the top-$k$ basket is equivalent to the event $[K_i<k]$, and thus we obtain
\begin{align}
\label{eq:em0}
E(M_0)&=E\left(\sum_{i=1}^k 1\{\mbox{$K_i<k$}\}\right)=\sum_{i=1}^kP(\mbox{$K_i<k$}).
\end{align}

To evaluate $E(M_0)$ by (\ref{eq:em0}) we need to compute the probabilities $P(K_i<k)$ for
$i=1,\ldots, k$. Direct evaluation of these probabilities is computationally intractable. A Markov chain approach based on the representations from \cite{journal/SSRNeLibrary/Corrado2007} is more efficient, but this method, too, resulted in extremely demanding numerical schemes in realistic scenarios. Thus, to characterize the algorithm performance, we suggest to use two simplification steps: {\it approximation} and {\it Poissonisation}.

{\it Poissonisation} is a common technique for analyzing
occupancy measures~\cite{Gnedin2007}.
Clearly, the End Point algorithm is nothing else but an occupancy scheme where each independent experiment
(random walk) results in placing one ball (visit) to an urn (node
of the graph). Under Poissonisation, we assume that the number of random walks is not a fixed
value $m$ but a Poisson random variable $M$ with mean $m$. In this scenario, the number $Y_j$ of visits to page $j$ has a Poisson distribution with parameter $m\pi_j$ and is independent of $Y_i$ for $i\ne j$. Because the number of hits in the Poissonised
model is different from the number of original hits, we use the notation $Y_i$ instead of $L_j$. Poissonisation simplifies
the analysis considerably due to the imposed independence of the $Y_j$'s.

Next to Poissonisation, we also apply {\it approximation} of $M_0$ by a closely related measure $M_1$:
\[M_1:=k-\sum_{i=1}^{k}(K'_i/{k}),\]
where $K'_i$ denotes the number of pages outside the top-$k$ that are ranked
higher than node $i=1,\ldots,k$. The idea behind $M_1$ is as follows: $K_i'$ is the number of mistakes with respect to node $i$ that lead to errors in the identified top-$k$ list. The sum in the definition of $M_1$ is the average number of such mistakes with respect to each of the top-$k$ nodes.

Two properties of $M_1$ make it more tractable than $M_0$. First, the average value of $M_1$ is defined as
\[E(M_1)=k-\frac{1}{k}\sum_{i=1}^{k}E(K'_i),\]
which involves only the average values of $K_i'$ and not their distributions. Second, $K_i'$ involves only the nodes outside of the top-$k$ for each $i=1,\ldots,k$, and thus we can make use of the following convenient measure $\mu(y)$:
\[\mu(y):=E(K'_i|Y_i=y)=\sum_{j=k+1}^nP(Y_j\ge y),\;i=1,\ldots,k,\]
which implies
\[E(K_i')=\sum_{y=0}^\infty P(Y_i=y)\mu(y),\;i=1,\ldots,k.\]
Therefore, we obtain the following expression for $E(M_1)$:
\begin{equation}\label{eq:em4}
E(M_1)=k-\frac{1}{k}\sum_{y=0}^{\infty}\mu(y)\sum_{i=1}^kP(Y_i=y).\end{equation}

\bigskip

\noindent {\bf Illustrating example (cont.):}
Let us calculate $E(M_1)$ for the top-10 basket corresponding to the seed
node {\tt Jim Jackson (ice hockey)}. Using formula (\ref{eq:em4}), for
$m=8\times10^3;10\times10^3;15\times10^3$
we obtain $E(M_1)=7.75;9.36;9.53$. It took $2000$ runs to move from
$E(M_1)=7.75$ to $E(M_1)=9.36$, but then it needed $5000$ runs to advance from
$E(M_1)=9.36$ to $E(M_1)=9.53$. We see that we obtain quickly 2-relaxation
or 1-relaxation of the top-10 basket but then we need to spend a significant
amount of effort to get the complete basket. This is indeed in agreement with
the Monte Carlo runs (see e.g., Figure~\ref{fig:jjplayermcep}).
In the next theorem we explain this ``80/20 behavior'' and provide indication for
the choice of $m$.


\begin{theorem}
\label{th:m_m4}

In the Poisonized End Point Monte Carlo algorithm, if all top-$k$ nodes receive at least $y=ma>1$ visits and $\pi_{k+1}=(1-\varepsilon)a$, where $\varepsilon>1/y$ then

(i) to satisfy $E(M_1)>(1-\alpha) k$ it is sufficient to have
\[\sum_{j=k+1}^n\frac{(m\pi_j)^y}{y!}\,e^{-m\pi_j}\left[1+\sum_{l=1}^{\infty}\frac{(m\pi_{j})^l}{(y+1)\cdots (y+l)}\right]<\alpha k,\]
and

(ii) statement (i) is always satisfied if
\begin{equation}
\label{eq:upper_bound_m}
m> 2a^{-1}\varepsilon^{-2}[-\log(\varepsilon \pi_{k+1} \alpha  k)].\end{equation}
\end{theorem}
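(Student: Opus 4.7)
The plan is to convert the condition $E(M_1)>(1-\alpha)k$ into a bound on the single quantity $\mu(ma)=\sum_{j>k}P(Y_j\ge ma)$, and then control this tail by the Poisson structure of the $Y_j$. By hypothesis every top-$k$ node collects at least $y=ma$ visits, so $P(Y_i=y')=0$ whenever $i\le k$ and $y'<ma$; hence the outer sum in~\eqref{eq:em4} starts at $y'=ma$. Since $\mu(\cdot)$ is nonincreasing and $\sum_{y'\ge ma}\sum_{i\le k}P(Y_i=y')\le k$, pulling out $\mu(ma)$ gives $k-E(M_1)\le\mu(ma)$. It therefore suffices to show $\mu(ma)<\alpha k$.

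Part~(i) will then follow from the elementary identity $\lambda^l/l!=(\lambda^y/y!)\cdot\lambda^{l-y}/[(y+1)\cdots l]$, which rewrites each Poisson tail as
$$P(Y_j\ge y)=\frac{(m\pi_j)^y e^{-m\pi_j}}{y!}\Bigl[1+\sum_{l\ge 1}\frac{(m\pi_j)^l}{(y+1)\cdots(y+l)}\Bigr],$$
so that $\mu(ma)$ equals the left-hand side of~(i) term-by-term. For part~(ii), the bracket is bounded using $m\pi_j\le m\pi_{k+1}=(1-\varepsilon)y$ and $(y+1)\cdots(y+l)\ge(y+1)^l\ge y^l$, giving the geometric-sum estimate $1+\sum_{l\ge 1}(1-\varepsilon)^l=1/\varepsilon$. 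What remains is to estimate $S:=\sum_{j>k}(m\pi_j)^y e^{-m\pi_j}/y!$, and I expect the form $S\le e^{-y\varepsilon^2/2}/(\pi_{k+1}\sqrt{2\pi y})$. Combined, $\mu(ma)\le e^{-y\varepsilon^2/2}/(\varepsilon\pi_{k+1})$, and demanding this to be below $\alpha k$ and taking logarithms yields precisely $m>2a^{-1}\varepsilon^{-2}[-\log(\varepsilon\pi_{k+1}\alpha k)]$.

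The main obstacle will be the bound on $S$: a naive term-by-term estimate $P(Y_j=y)\le e^{-y\varepsilon^2/2}$ would introduce an unwanted $(n-k)$ factor and destroy the result. The trick is to factor $\lambda_j^y e^{-\lambda_j}=\lambda_j\cdot \lambda_j^{y-1}e^{-\lambda_j}$ with $\lambda_j:=m\pi_j$. The map $\lambda\mapsto\lambda^{y-1}e^{-\lambda}$ is unimodal with peak at $\lambda=y-1$, and the hypothesis $\varepsilon>1/y$ guarantees $(1-\varepsilon)y\le y-1$, placing every $\lambda_j$ in its increasing branch, so $\lambda_j^{y-1}e^{-\lambda_j}\le((1-\varepsilon)y)^{y-1}e^{-(1-\varepsilon)y}$. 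Combining with $\sum_{j>k}\lambda_j\le m$ (from $\sum_j\pi_j=1$), Stirling's $y!\ge\sqrt{2\pi y}(y/e)^y$, the elementary inequality $\varepsilon+\log(1-\varepsilon)\le-\varepsilon^2/2$, and the identifications $a(1-\varepsilon)=\pi_{k+1}$ and $m/y=1/a$, will then yield the advertised estimate on $S$ and complete part~(ii).
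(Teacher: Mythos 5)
Your proposal is correct and follows essentially the same route as the paper's proof: the reduction of $k-E(M_1)$ to $\mu(ma)$, the Poisson-tail identity giving the bracket in (i), the geometric-sum bound $1/\varepsilon$, the factorization $\lambda_j\cdot\lambda_j^{y-1}e^{-\lambda_j}$ with the unimodality/increasing-branch argument enabled by $\varepsilon>1/y$, Stirling's bound, and the inequality $\varepsilon+\log(1-\varepsilon)\le-\varepsilon^2/2$ all appear in the paper in the same roles. The only (harmless) difference is that you carry the extra $\sqrt{2\pi y}$ factor from Stirling before discarding it, whereas the paper uses the cruder $y!\ge y^y/e^y$ directly.
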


{\bf Proof.}  (i) By definition of $M_1$, to ensure that $E(M_1)\le (1-\alpha) k$ it is sufficient that $E(K'_i| Y_i)\le \alpha k$ for each $Y_i\ge y$ and each $i=1,\ldots,k$. Now, (i) follows directly since for each $Y_i\ge y$ we have
$E(K'_i| Y_i)\le \mu(y)$ and by definition of $\mu(y)$ under Poissonisation we have
\begin{align}
\label{eq:mu(y)}
 \mu(y)&=\sum_{j=k+1}^n\frac{(m\pi_j)^y}{y!}\,e^{-m\pi_j}\left[1+\sum_{l=1}^{\infty}\frac{(m\pi_{j})^l}{(y+1)\cdots (y+l)}\right].
\end{align}

To prove (ii), using (\ref{eq:mu(y)}) and the conditions of the theorem, we obtain:
\begin{align}
\nonumber
\mu(y)
&\le \sum_{j=k+1}^n\frac{(m\pi_j)^y}{y!}\,e^{-m\pi_j}\left[1+(1-\varepsilon)+(1-\varepsilon)^2+\cdots\right]\\
\nonumber
&= \frac{1}{\varepsilon}\sum_{j=k+1}^n\frac{(m\pi_j)^y}{y!}\,e^{-m\pi_j}
=\frac{1}{\varepsilon}\sum_{j=k+1}^n\pi_j\frac{m^y\pi_j^{y-1}}{y!}\,e^{-m\pi_j}\\
\nonumber
&\stackrel{\{1\}}{\le} \frac{1}{\varepsilon}\, \frac{m^y\pi_{k+1}^{y-1}}{y!}\,e^{-m\pi_{k+1}}\stackrel{\{2\}}{\le}\frac{1}{\varepsilon}\, \frac{1}{\pi_{k+1}}\left(\frac{m\pi_{k+1}}{y}\right)^ye^{y-m\pi_{k+1}}\\
\label{eq:mu_upper_bound}
&
=\frac{1}{\varepsilon \pi_{k+1}}\,\left[(1-\varepsilon)e^\varepsilon\right]^{ma}.
\end{align}
Here $\{1\}$ holds because $\sum_{j\ge k+1}\pi_j\le 1$ and $(m\pi_{j})^{y-1}/(y-1)!\exp\{-m\pi_{j}\}$ is maximal at $j=k+1$. The latter follows from the conditions of the theorem: $m\pi_{k+1}=(1-\varepsilon)y\le y-1$ when $\varepsilon>1/y$. In
\{2\} we use that $y!\ge y^y/e^y$.

Now, we want the last expression in (\ref{eq:mu_upper_bound}) to be smaller than $\alpha\,k$. Solving for $m$, we get:
\[ma(\log(1-\varepsilon)+\varepsilon)<\log(\varepsilon \pi_{k+1}\alpha\,k).\]
Note that the expression under the logarithm on the right-hand side is always smaller than 1 since $\alpha<1$, $\varepsilon<1$ and $k\pi_{k+1}<1$. Using  $(\log(1-\varepsilon)+\varepsilon)=-\sum_{k=2}^\infty\varepsilon^k/k\ge -\varepsilon^2/2$, we obtain (ii). \qed

From (i) we can already see that the 80/20 behavior of $E(M_1)$ (and, respectively, $E(M_0)$) can be explained mainly by the fact that $\mu(y)$ drops drastically with $y$ because the Poisson probabilities decrease faster than exponentially.

The bound in (ii) shows that $m$ should be rougthly of the order $1/\pi_{k}$. The term $\varepsilon^{-2}$ is not defining since $\varepsilon$ does not need to be small. For instance, by choosing $\varepsilon=1/2$ we can filter out the nodes with Personalized PageRank not higher than $\pi_k/2$. This often may be sufficient in applications. Obviously, the logarithmic term is of a smaller order of magnitude.

We note that the bound in (ii) is very rough because in its derivation we replaced $\pi_j$, $j>k$, by their maximum value $\pi_{k+1}$. In realistic examples, $\mu(y)$ will be much smaller than the last expression in (\ref{eq:mu_upper_bound}), which allows for $m$ much smaller than in (\ref{eq:upper_bound_m}). In fact, in our examples good top-$k$ lists are obtained if the algorithm is terminated at the point when for some $y$, each node in the current top-$k$ list has received at least $y$ visits while the rest of the nodes have received at most $y-d$ visits, where $d$ is a small number, say $d=2$. Such choice of $m$ satisfies (i) with reasonably small $\alpha$. Without a formal justification, this stopping rule can be understood since, roughtly, we have $m\pi_{k+1}=ma(1-\varepsilon)\approx ma-d$, which results in a small value of $\mu(y)$.


\section{Application to Name Disambiguation}
\label{sec:disamb}

In this section we apply Personalized PageRank computed using Monte-Carlo method to Person Name Disambiguation problem.
In the context of Web search when a user wants to retrieve information about a person by his/her name, search engines
typically return Web pages which contain the name but can refer to different persons. Indeed, person names are highly
ambiguous, according to US Census Bureau approximately 90,000 names are shared by 100 million people. Approximately
$5-10\%$ of search queries contain person name \cite{weps2-testbed}. To assist a user in finding the target person
many studies have been done, in particular, within WePS initiative (http://nlp.uned.es/weps/weps-3/).

In our approach we disambiguate the referents with the help of the Web graph structure. We define a \textit{related page}
as the one that addresses the same topic or one of the topics mentioned in a \textit{person page} - a page that contains person name. Kleinberg in \cite{hits-art} has given an illuminating example that ambiguous senses of the query can be separated on a query-focused subgraph. In our context, focused subgraph is analogous to a graph of Web search result pages with their graph-based neighbourhood represented by forward and backward links. Therefore, we can expect that for ambiguous person name query densely linked parts
of the subgraph form clusters corresponding to different individuals.

The major problem of applying HITS algorithm and other community discovery methods to WePS dataset consists in the lack of information about the Web graph structure. Personalized PageRank can be used to detect related pages of the target page. Our theoretical and experimental results show that, quite opportunely, Monte-Carlo method is a fast way to approximate Personalized PageRank in a local manner, i.e., using only page forward-links. In such a case global backward-link neighbours are usually missing. Therefore, generally we cannot expect neighbourhoods of two pages referring to one person to be interconnected. Nevertheless, we found useful to examine content of related pages.

In the following we will briefly describe our approach, further details will be published soon. With this approach we participated in WePS-3 evaluation campaign.

\subsection{System Description}
\subsubsection*{Web Structure based Clustering.}
It the first stage, we cluster person pages appeared in search results based on the Web structure. Thereto, we determine related pages
of each person page using Personalized PageRank. To avoid negative effect of purely navigational links we perform random walk of the Monte-Carlo computation on links to pages with different host name than the current host. We estimate the top-k list of related pages for each target page. In experiments we have used two values of $k = \left\{8,16\right\}$ and also two settings of Personalized PageRank computation: the damping factor $c$ equal to $\left\{0.2, 0.3\right\}$ respectively.

In the following step two Web pages that contain the name are merged in one cluster if they share some related pages. Since the whole link structure of the Web was unknown to us, the resulted Web structure clustering is limited to local forward-link neighbourhood of pages. We therefore appeal to the content of the pages in the next stage.

\subsubsection*{Content based Clustering.}
In the second stage, the rest of the pages that did not show any link preference are clustered based on the content. With this goal in mind,
we apply a preprocessing step to all pages including person pages and pages related to them. Next, for each of these page we build a vector of words with corresponding frequency score ($tf$) in the page. After that, we use a re-weighting scheme as follows. The word $w$ score at person page $t$, $tf(t,w)$, is updated at each related page $r$ in the following way: $tf'(t,w) = tf(t,w) + tf(t,w)*tf(r,w)$, where $r$
is a page in related pages set of person page $t$ and person page $t$ is a page obtained from search results. This step resembles voting process. Words that appear in related pages get promoted and thus, random word scores found in the person page are lowered. At the end, vector is normalized and top 30 frequent terms are taken as a person page profile.

Finally, we apply HAC algorithm on the basis of Web structure clustering to the rest of the pages. Specifically, we use average-linkage HAC with the cosine measure of similarity. The clustering threshold for HAC algorithm was determined manually.

\subsection{Results}
During the evaluation period of WePS-3 campaign we have experimented with the number of related pages and the type of content extracted from pages. We have chosen to combine a small number of related pages with the full content of the page and, oppositely, a large number of related pages with small extracted content. We carried out the following runs.

\textbf{PPR8 (PPR16)}: top 8 (16) related pages were computed using Personalized PageRank, the full (META tag) content of the Web page was used in the HAC step.

Our methods achieved second place performance at WePS-3 campaign. We received values of metrics for PPR8 (PPR16) runs as follows: \textbf{0.7(0.71); 0.45(0.43); 0.5(0.49)} for BCubed precision (BP), recall (BR) and harmonic mean of BP and BR (F-0.5) respectively. The run PPR16 has shown slightly worse performance compared to the PPR8 run. The results have demonstrated that our methods are promising and
suggest future research on using Web structure for the name disambiguation problem.


\section*{Acknowledgments}
We would like to thank Brigitte Trousse for her very helpful remarks and suggestions.

\end{document}